\documentclass{sig-alternate}

\usepackage{stmaryrd}
\usepackage{cite}
\usepackage{paralist}
\usepackage{amsmath}
\usepackage{amsthm}
\usepackage{amsfonts}
\usepackage{graphicx}
\usepackage{epstopdf}
\usepackage{subcaption}
\usepackage{subcaption}
\usepackage{listings}
\usepackage{paralist}
\usepackage{tikz}
\usepackage[ruled,commentsnumbered,linesnumbered]{algorithm2e}
\usepackage{lipsum}
\usepackage{hyperref}
\usepackage{url}
\usepackage{multicol}
\usepackage{blindtext}
\usepackage{array}
\usepackage{floatrow}
\usepackage{thmtools}
\usepackage{caption}
\usepackage{lipsum}

\hypersetup{
    bookmarks=true,         
    unicode=false,          
    pdftoolbar=true,        
    pdfmenubar=true,        
    pdffitwindow=false,     
    pdfstartview={FitH},    
    pdftitle={My title},    
    pdfauthor={Author},     
    pdfsubject={Subject},   
    pdfcreator={Creator},   
    pdfproducer={Producer}, 
    pdfkeywords={keyword1} {key2} {key3}, 
    pdfnewwindow=true,      
    colorlinks=true,       
    linkcolor=blue,          
    citecolor=blue,        
    filecolor=magenta,      
    urlcolor=blue           
}

\input{prelude1}

\newcommand{\sembrack}[1]{[\![#1]\!]}

\declaretheoremstyle[headfont=\bf]{normalhead}

\begin{document}

\title{Proving Abstractions of Dynamical Systems through Numerical Simulations}


 \numberofauthors{2}
\author{Sayan Mitra
\end{tabular} 
\\
\begin{tabular}{c}
\affaddr{mitras@illinois.edu}\\
\affaddr{Coordinate Science Laboratory}\\
\affaddr{University of Illinois at Urbana Champaign}\\
\affaddr{Urbana, IL 61801} \\
}

\maketitle

\begin{abstract}
A key question that arises in  rigorous analysis of cyberphysical systems under attack involves establishing whether or not  the attacked  system deviates significantly from the ideal allowed behavior. This is the problem of deciding whether or not the 
ideal system is an abstraction of the attacked system.
A quantitative variation of this question can capture how much the attacked system deviates from the ideal.
Thus, algorithms for deciding abstraction relations can help measure the effect of attacks
on cyberphysical systems and  to develop attack detection strategies. 
In this paper, we present a decision procedure for proving that one nonlinear dynamical system is a quantitative abstraction of another. Directly computing the reach sets of these nonlinear systems are undecidable in general and reach set over-approximations do not give a direct way for proving abstraction. Our procedure uses (possibly inaccurate) numerical simulations and a model annotation to compute tight approximations of the observable behaviors of the system and then uses these approximations to decide on abstraction. 
We show that the procedure is sound and that it is guaranteed to terminate under reasonable robustness assumptions. 
\end{abstract}
\begin{keywords}
cyberphysical systems, adversary, simulation, verification, abstraction.
\end{keywords}

\section{Introduction}
\label{sec:intro}

Cyberphysical systems can take the form of anti-lock braking systems (ABS) in cars, process control systems in factories implemented over SCADA, all the way to city and nation-scale networked control systems for traffic, water, and power. 
Security breaches in cyberphysical systems can be disastrous. 
Aside from the obvious social motivation, an inquiry into the security of cyberphysical systems is also 
propelled by  new scientific questions about architechting and understanding computing systems that control the physical world.
As these computing systems are embedded in the physical world  (a) they require preservation of dynamical properties 
that cannot be characterized purely in terms of software state, and 
(b)  they  can be breached in ways that go beyond vulnerabilities that are exploited in stand alone computing systems.
While the dynamical operation remains vulnerable to full-fledged attacks  on its
computing and the communication components---for instance, a denial-of-service-attack on the computers controlling the power grid can take it down---it is
also vulnerable to more elusive {\em dynamics-aware attacks\/} that subtly change local behaviors in ways that lead to instability, unsafe behavior, and a loss of availability of the system.
In this paper, we present new results that contribute towards our  longer term goal of developing a framework for analyzing security properties of cyberphysical under different classes of attacks.



\paragraph*{Role of Models and Abstractions}
Design of control software begins with a mathematical model for the underlying physical process~\cite{luenberger,khalil-book-3ed}. 
which is usually described in the language of ordinary differential equations (ODEs). 
Any meaningful notion of attack, safety, resilience, availability, and performance, therefore, 
has to be expressed in this language. 
Indeed, our analysis framework is designed for analyzing models of cyberphysical systems 
that combine these ODEs with automaton models that are  used for representing computations.

A model $B$ is said to be an {\em abstraction\/} of another model $A$
if every observable behavior of $A$ is also an observable behavior of $B$~\cite{clarke1994model,TIOAmon}.
The abstract model $B$ could capture desired properties.
Here are two example properties:
``Alarm must go off $6$ seconds before car gets within $4 m$ of obstacle even if the position sensors are jammed'' (safety),
``Voltage remains within the range $B$ and eventually converges to the smaller range $B'$'' (invariance and progress). 
. 
Establishing that it is an abstraction of $A$ implies that all behaviors of $A$ satisfy these properties. 
This then enables us to substitute $A$ with $B$ when we are  analyzing a lager system (containing $A$), 
in which only these properties of $A$ are relevant.
The abstract model $B$ has more behaviors and is typically simpler to analyze than the concrete model $A$.
In some extreme cases, the abstract model lends itself to  completely algorithmic analysis even though 
the concrete model does not (see, for example~\cite{alur:tcs,henzinger95algorithmic,DBLP:conf/vmcai/PrabhakarDMV13}). 
This notion of abstraction is related to the notions of bisimilarity, equivalence, and implementation
used elsewhere in the literature. Roughly, if $A$ and $B$ are abstractions of each other then sometimes they are said to be observationally equivalent.

For models with continuous dynamics, it is makes sense to relax the notion of abstraction using a metric on the observable behaviors~\cite{GJP:IFAC2006,approxbisim}.  $B$ is said to be a $c$-abstraction of $A$, for some positive constant $c$, 
if every observable behavior of $A$ is within $c$ distance of some observable behavior of $B$,
where the distance is measured by some metric on the observables. 
In this paper we also look at time-bounded versions of abstraction.
$B$ is a $c$-abstraction of $A$ up to time $T$, 
if every observable behavior of $A$ of duration $T$ is within $c$ distance of 
some observable behavior of $B$ (also of duration $T$).

We can state properties about cyberphysical systems under attack with this relaxed notion.
Let $B$ be the nominal model  (without any attack) and $A_1$ be a model of the system under attack $1$. 
If we can prove that $B$ is a $c$-abstraction of $A_1$ then it follows that 
none of the observable behaviors deviate more that $c$ under attack $1$.
This gives a systematic way of classifying attacks with respect to their impact on deviation from ideal behavior.
If $B$ is {\em not} a $c$-abstraction of $A_2$---the model of the system under attack $2$---then it follows that 
attack $2$ is worse than attack $1$ in the sense that it causes a larger deviation from the ideal. 
A $c$-abstraction relation can also be used for reasoning about attack detectability and distinguishability.
If $B$ is a $c$-abstraction of  $A_1$ but our attack sensing mechanisms can only detect 
deviations in observable behavior from $B$ that are greater than $c$, 
then attack $1$ will go undetected.
If $B$ is also a $c$-abstraction of  $A_2$, then the same detection mechanism will also fail to 
distinguish the  two attacks.

The above discussion illustrates that many questions related to security and attacks can be formulated in terms of 
whether or not a model $B$ is an (relaxed) abstraction of another model $A$.
A building-block for our analytical framework is a semi-decision procedure for answering precisely this type of queries.
The procedure is sound, that is, whenever it terminates with an answer ($c$-abstraction or not)
the answer is correct. 
It is a semi-decision procedure because it is guaranteed to terminate, whenever the pair of models satisfy or violate the query robustly. 
Specifically, our contributions  are:
\begin{enumerate}[(a)]
\item
We formalize this quantitative notion of abstraction for models of dynamical systems
as the maximum distance from any trace of the concrete model $A_1$ to some trace of the abstract model $A_2$. 
\item For nonlinear ODE models, we present a semi-decision procedure for deciding 
if $A_2$ is a $c$-abstraction of $A_2$ up to a time bound, for any positive constant $c$. 
We show that the procedure is sound and it is guaranteed to terminate if either $A_2$ is at least a $\frac{c}{2}$-abstraction of $A_1$
or if there exists a trace of $A_1$ that is more than $2c$ distance away from all traces of $A_2$.
\item This semi-decision procedure and some of our earlier works for reachability~\cite{DMV:EMSOFT2013} 
use representations of reach sets of models. One of the contributions of this paper is the formalization of 
a natural data structure called {\em pipes\/} to represent simulation traces and reachable sets 
and identifying some of its key properties.
\item 
We present a procedure for computing over-approximations of 
unbounded time reach sets of individual models.
\end{enumerate}  

Checking equivalence of  two finite state machines---arguably the simplest class of models---is well-known to be  decidable.
The problem was shown to be decidable for deterministic push-down automata in the celebrated paper~\cite{senizergues1997equivalence}.
The same problem  becomes undecidable for finite state transducers~\cite{griffiths1968unsolvability} and 
nondeterministic pushdown automata.
For infinite state models that naturally capture computation and physics,
such as timed and hybrid automata~\cite{alur95algorithmic,henzinger95algorithmic,TIOAmon},
not only is equivalence checking undecidable, but so is the conceptually simpler 
problem of deciding if a single state can be reached by a given automaton. 
For models described by nonlinear ODEs, exactly computing the state reached from a single initial state at a given time is itself a hard problem. 
A sequence of recent results~\cite{donze2010breach,annpureddy2011s,DMV:EMSOFT2013,HuangM:hscc2012,HuangM:HSCC2014,DuggiralaJZM12}  circumvent these negative results by taking a more practical view of the reachability problem.
Specifically, they aim to compute over-approximations of the reach set over a bounded-time horizon.
Although some of these procedures require additional annotations of the models and provide  weaker soundness and completeness guarantees,
they point towards a practical way forward in automatically analyzing  reachability properties of moderately complex cyberphysical systems. 
The key characteristic of these approaches is that they combine static  model information (e.g. the differential equations and the text of the program, and not solutions or program runs)  
with dynamic information (e.g., possibly inaccurate numerical simulations or data from runtime logs), 
to compute precise over-approximations of bounded time reach sets. 
This static-dynamic analysis approach takes advantage of both static analysis techniques like propagating reach sets with 
dynamically generates information. 

This paper takes this static-dynamic analysis approach to checking abstraction relations. 
If an over-approximation of the reach set of $A$ is close the an over-approximation of the reach set of $B$ this 
means that  every behavior $\nu$ of $A$ is close to some over-approximation of $B$, but not this {\em does not\/} 
imply that $\vu$ is close to some actual behavior of $B$. 
Our  procedure (Algorithm~1) therefore has to take into account the precision of the 
over-approximations of $A$ and $B$ in deciding that each behavior of $A$ is indeed close to some (or far from all) behavior(s) of $B$. 
We also present a fixpoint procedure (Algorithm~2) which uses this static-dynamic approach to compute unbounded time reach sets.
For the sake of simplifying presentations, in this paper we presented the results for models of nonlinear dynamical systems, but these results can be extended to switched systems~\cite{DL2003} in a more or less straightforward fashion. 
Switched systems can capture commonly used time-triggered control systems which cover a large fraction of practical cyberphysical systems. 
Analogous extensions for reachability algorithms have been presented in~\cite{DMV:EMSOFT2013}.
The extension to hybrid models which can capture event-triggered interaction of software and continuous dynamics  
will be presented in a future paper.

\subsection{The Science}
\label{sec:science}

\begin{quote}
{\em ``I have observed stars of which the light, it can be proved, must take two million years to reach the earth.''}
---Sir William Herschel, British astronomer and telescope builder, 
having identified Uranus (1781), the first planet discovered since antiquity. 
\end{quote}

This paper presents a piece of  mathematical machinery (the semi-decision procedure) 
that is needed  for rigorous analysis of cyberphysical systems under different attacks. 
This procedure can be seen as a  {\em scientific instrument\/} that enables new types of attack impact measurements.
As we discussed above, abstraction is a central concept in any formal reasoning framework.
Abstraction relations in their quantitative form can be used to bound the distance from the set of observable 
behavior of one system to the set of  observable behaviors of another (ideal) system.
Thus, abstractions can give approximate measures of the deviation of an implementation from an idealized specification. 
Such measures can aid in the systematic evaluation of the effects of an attack  and 
in gaining understanding of different classes of attacks. 
In summary, the static-dynamic analysis techniques and specifically the semi-decision procedure 
presented in this paper can be seen as humble measuring instruments, 
but ones that could catalyze the science of security for CPS.

\section{Dynamical Systems}
\label{sec:model}

In this section, we present the modeling framework and some technical background used throughout the paper.
Some of the standard notations are left out for brevity. We refer the reader to the Appendix~\ref{app:basics} for details.

In this paper, we focus on models of dynamical systems with no inputs. 
Such models are also called autonomous or closed.  
An autonomous dynamical system is specified by a collection of ordinary differential equations (ODEs), an output mapping, and a set of initial states. 
\begin{definition}
\label{def:sys}
An $(n,m)$-dimensional {\em autonomous dynamical system\/} $A$ is a tuple $\langle  \Theta, f,g\rangle$ where
\begin{enumerate}[(i)]
\item $\Theta\subseteq \reals^n$ is a compact set of {\em initial states\/}; $\reals^n$ is the state space and it's elements are called {\em states\/}. 
\item $f:  \reals^n \mapsto \reals^n$ is a Lipschitz continuous function called the {\em dynamic mapping\/}. 
\item $g:  \reals^n \mapsto \reals^m$ is a Lipschitz continuous function called the {\em output mapping\/}. 
The output dimension of the system is $m$.
\label{it:f}
\end{enumerate}
\end{definition}
For a given initial state $\vx \in \Theta$, and a time duration $T \in \nnreals$, 
a {\em solution\/} (or trajectory) of $A$  is a pair of functions $(\xi_\vx,\nu_\vx)$: a state trajectory $\xi:[0,T] \mapsto \reals^n$
and an output trajectory $\nu:[0,T] \mapsto \reals^m$, such that 
(a) $\xi_\vx$ satisfies 
\begin{inparaenum}[(a)]
\item $\xi_\vx(0) = \vx$, 
\item for any $t\in [0,T]$, the time derivative of $\xi_\vx$ at $t$ satisfies the differential equation:
\begin{equation}
\label{eq:generaldynamics}
\dot \xi_\vx(t) = f(\xi_\vx(t)),
\end{equation}
\end{inparaenum}
And, (b) at each time instant $t\in [0,T]$, the output trajectory satisfies:
\begin{equation}
\label{eq:output}
\nu_\vx(t) = g(\xi_\vx(t)).
\end{equation}
Under the Lipschitz assumption (\ref{it:f}),  the differential equation~\eqref{eq:generaldynamics} admits a unique state trajectory defined by the initial state $\vx$
which in turn defines the output trajectory. 
When the initial state is clear from context, we will drop the suffix and write the trajectories as $\xi$ and $\nu$.
Given a state trajectory $\xi$ over $[0,T]$, the corresponding output trajectory or {\em trace\/} is defined in the obvious way as
$\nu(t) = g(\xi(t))$, for each $t \in [0, T]$. The same trace $\nu$, however, may come from a set of state trajectories. 
The set of all possible state trajectories and output trajectories of $A$ (from different initial states in $\Theta$) 
are denoted by  $\exec{A}$ and $\trace{A}$, respectively. A state $\vx \in \reals^n$ is said to be reachable if there 
exists $\vx' \in \Theta$ and $t \in \nnreals$ such that $\xi_{\vx'}(t) = \vx$. 
The set of all reachable states of $A$ is denoted by $\reach{A}$.
Variants of these notations are defined in the Appendix~\ref{app:basics}.

\paragraph*{Example}
We define a $(2,2)$-dimensional dynamical system. 
The set of initial states is defined by the rectangle $\Theta = [0.9,0.95] \times [1.5, 1.6]$.
The dynamic mapping is the nonlinear vector valued function:  
\[
f(x_1, x_2) = [1 + x^2y - 2.5x, -x^2y + 1.5x]. 
\]
And the output mapping is the vector valued identity function $g(x_1,x_2) = [x_1, x_2]$.
An over-approximation of the set of reachable states upto 10 time units 
(computed using the algorithm describes in~\cite{DMV:EMSOFT2013}) is shown in Figure~\ref{fig:bruss}.

\begin{figure}%
\includegraphics[width=\columnwidth]{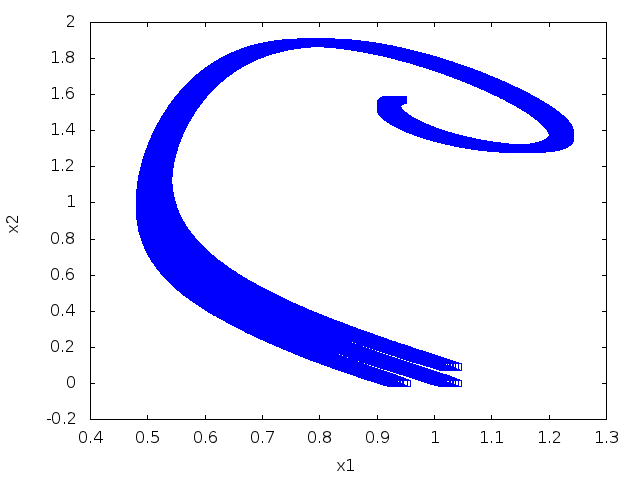}%
\caption{Reachable states and traces of the dynamical system in Example 1.}%
\label{fig:bruss}%
\end{figure}

\paragraph*{Trace metrics}
We define a metric on $d$ the set of traces of the same duration and dimension.
Given two traces $\nu_1, \nu_2$ of  duration $T$ and dimension $m$, 
we define 
\[
d(\nu_1, \nu_2) = \sup_{t \in [0,T]} |\nu_1(t) - \nu_2(t)|.
\] 
The distance from a set of traces $N_1$ to another set $N_2$ (with members of identical duration and dimension) 
is defined by the one-sided Hausdorff distance $d_H$ from $N_1$ to $N_2$.

\begin{definition}
\label{def:abs}
Given two autonomous dynamical systems $A_1$ and $A_2$ of identical output dimensions, 
a positive constant $c>0$ and a time bound $T>0$, $A_2$ is said to be $c$-{\em abstraction\/} of $A_1$ 
upto time $T$, if 
\[
d_H(\trace{A_1}(T), \trace{A_2}(T)) \leq c.
\] 
We write this as $A_1 \preceq_{c,T} A_2$.
\end{definition}
Thus, if $A_2$ is a $c$-abstraction of $A_1$, then for every output trace $\nu_1$ of $A_1$
there exists another output trace of $A_2$ which is differs from $\nu_1$ at each point in time by at most $c$.
Since, the definition only bounds  the one-sided Hausdorff distance, every trace of $A_2$ may not have a 
neighboring trace of $A_1$.
With $c= 0$, we recover the standard notion of abstraction, that is, $\trace{A_1} \subseteq \trace{A_2}$.
The next set of results follows immediately from the definitions and triangle inequality.

\begin{proposition}
\label{prop:c1c}
Let $A_1, A_2$ and $A_3$ be dynamical systems of identical output dimensions
and $c, c',  T$ be positive constants.
\begin{enumerate}
\item If $A_1 \preceq_{c,T} A_2$ then for any $c_1 \geq c$ and $T_1 \leq T$ $A_1 \preceq_{c_1,T_1} A_2$.
\item If $A_1 \preceq_{c,T} A_2$ and $A_2 \preceq_{c',T} A_3$ then $A_1 \preceq_{c+c',T} A_3$.
\end{enumerate}
\end{proposition}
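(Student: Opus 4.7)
My plan is to reduce both claims to direct manipulations of the definition of $\preceq_{c,T}$, using monotonicity of the sup-metric under interval restriction together with the triangle inequality for the pointwise metric $d$ on traces. Since the definition expands to $d_H(\trace{A_1}(T),\trace{A_2}(T)) \leq c$, which in turn is shorthand for $\sup_{\nu \in \trace{A_1}(T)} \inf_{\nu' \in \trace{A_2}(T)} d(\nu,\nu') \leq c$, I intend to prove each part by fixing an arbitrary trace of the ``left-hand'' system and exhibiting (or at least approximating to within $\epsilon$) a witness trace of the ``right-hand'' system at the required distance.

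For part 1, the case $c_1 \geq c$ is immediate from the monotonicity of ``$\leq$.'' The case $T_1 \leq T$ I handle as follows. For any $\nu_1 \in \trace{A_1}(T_1)$, pick the initial state $\vx \in \Theta_{A_1}$ that generates $\nu_1$ via the unique solution guaranteed by the Lipschitz assumption; let $\tilde\nu_1 \in \trace{A_1}(T)$ be the trace of $A_1$ of duration $T$ from the same $\vx$. By hypothesis there exists $\tilde\nu_2 \in \trace{A_2}(T)$ with $d(\tilde\nu_1,\tilde\nu_2) \leq c$. Let $\nu_2 \in \trace{A_2}(T_1)$ be the restriction of $\tilde\nu_2$ to $[0,T_1]$, obtained as the trace from the same initial state of $A_2$ but stopped earlier. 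Then
\[
d(\nu_1,\nu_2) \;=\; \sup_{t\in[0,T_1]}|\nu_1(t)-\nu_2(t)| \;\leq\; \sup_{t\in[0,T]}|\tilde\nu_1(t)-\tilde\nu_2(t)| \;\leq\; c \;\leq\; c_1,
\]
and since $\nu_1$ was arbitrary the conclusion $A_1 \preceq_{c_1,T_1} A_2$ follows.

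For part 2, I fix an arbitrary $\nu_1 \in \trace{A_1}(T)$ and use each hypothesis in turn. The first gives $\nu_2 \in \trace{A_2}(T)$ with $d(\nu_1,\nu_2) \leq c$; the second, applied to this $\nu_2$, gives $\nu_3 \in \trace{A_3}(T)$ with $d(\nu_2,\nu_3) \leq c'$. The triangle inequality for $d$, which is inherited pointwise from the metric on $\reals^m$, yields $d(\nu_1,\nu_3) \leq c + c'$, and taking the supremum over $\nu_1$ gives the Hausdorff bound.

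The one subtle point is that $d_H$ is a sup-inf, so strictly speaking the hypotheses only furnish, for every $\epsilon>0$, a $\nu_2$ within $c+\epsilon$ and then a $\nu_3$ within $c'+\epsilon$, yielding $d(\nu_1,\nu_3)\le c+c'+2\epsilon$; letting $\epsilon \downarrow 0$ recovers $c+c'$. Alternatively one can argue that the infimum is attained because $\Theta$ is compact, $f$ and $g$ are Lipschitz, and hence the trace set $\trace{A_i}(T)$ is the continuous image of a compact set in the sup-metric, so an honest minimizing $\nu_2,\nu_3$ exist. Either route is routine, and I expect this technical step rather than the triangle inequality itself to be the main place one has to be careful.
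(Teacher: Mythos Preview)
Your proof is correct and follows exactly the approach the paper indicates: the paper gives no proof beyond the remark that the proposition ``follows immediately from the definitions and triangle inequality,'' and your argument simply spells this out. Your care about whether the infimum in $d_H$ is attained (the $\epsilon$-argument or the compactness-of-$\Theta$ argument) and about extending/restricting traces via uniqueness of solutions goes beyond what the paper bothers to say, but is the right way to make the claim precise.
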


\paragraph*{The decision problem}
The decision problem we solve in this paper takes as input 
a pair of autonomous dynamical systems $A_1$ and $A_2$ with identical output dimensions, annotations 
for these systems (namely, discrepancy functions which are to be defined in what follows),
a constant $c$ and a time bound $T$, and decides if $A_1 \preceq_{c,T} A_2$.
The computations performed by our algorithm uses {\em pipes\/} to represent sets of executions and traces.
In the next subsection, we define tubes and their properties.

\subsection{Working with Pipes}
\label{sec:tubes}

Pipes are used to represent sets of bounded traces and executions. 
Syntactically, an $n$-dimensional {\em pipe\/} is a sequence of {\em segments\/}
\[
\Pi = (P_0, t_0), (P_1, t_1), \ldots, (P_k, t_k),
\]
where in each segment $P_i$ is a subset of $\reals^n$ and $t_i \in \plreals$ and $t_i > t_{i-1}$.
The {\em duration\/} of the pipe is $\Pi.dur = t_k$ and its {\em length\/} is the number of segments $\Pi.len = k+1$.  

The semantics of a tube $\Pi$ is defined once we fix a dynamical system $A$.
It is the set of executions (or traces) of $A$ of duration $t_k$ defined as:
\begin{eqnarray*}
\sembrack{\Pi}_A = \{ \xi \in \exec{A}  & |&  \forall \ t \in [0,t_0], \xi(t) \in P_0,  \\
&& \forall \ 1 \leq i \leq \Pi.len,  t \in [t_{i-1},t_i], \xi(t) \in P_i\}.
\end{eqnarray*}
Our algorithms use tubes with finite representation---the sets $P_i$'s are compact
sets represented by polyhedra. 

We define $dia(\Pi) = \max_{i \in [\Pi.len]} dia(P_i)$ as the maximum diameter of any of the segments.
We say that two pipes $\Pi$ and $\Pi'$ are {\em comparable\/} if they have the same duration, length, and dimension 
and furthermore, for each $i \in [\Pi.len]$, $t_i = t_i'$. 
For two comparable pipes $\Pi, \Pi'$, we say that $\Pi$ is contained in $\Pi'$, denoted by $\Pi \subseteq \Pi'$, iff  
for each $i \in [\Pi.len]$, $P_i \subseteq P_i'$. 
The distance from $\Pi$ to $\Pi'$ is defined in the natural way by taking the maximum distance 
from the corresponding segments of $\Pi$ to those of $\Pi'$. 
\[
d_H(\Pi,\Pi') = \max_{i \in [\Pi.len]} d_H(P_i,P'_i).
\]
Obviously, $\Pi \subseteq \Pi'$ implies that $d_H(\Pi,\Pi') = 0$.

We say that two pipes are {\em disjoint\/}, denoted by $\Pi \cap \Pi' = \emptyset$, 
if and only if for each $i \in [\Pi.len]$, the corresponding segments are disjoint. 
That is, $P_i \cap P_i' = \emptyset$. 
%
The following straightforward propositions relate properties of pipes and the 
sets of executions (or traces) they represent.

\begin{proposition}
\label{prop:tube:contained}
Consider two comparable pipes $\Pi_1, \Pi_2$. 
If $\Pi_1 \subseteq \Pi_2$ then for any dynamical system $A$
$\sembrack{Pi_1}_A \subseteq \sembrack{Pi_2}_A$.
\end{proposition}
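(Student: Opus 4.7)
The plan is to unfold the definitions: pipe containment is defined segment-by-segment, the semantic map $\sembrack{\cdot}_A$ is defined by pointwise membership of an execution in each segment, and comparability ensures the two pipes share the same time stamps $t_0, \ldots, t_k$ and the same length. So the statement should reduce to a direct chase through quantifiers, with no real obstacle.

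First I would fix an arbitrary execution $\xi \in \sembrack{\Pi_1}_A$ and aim to show $\xi \in \sembrack{\Pi_2}_A$. Write $\Pi_1 = (P_0, t_0), \ldots, (P_k, t_k)$ and $\Pi_2 = (P_0', t_0'), \ldots, (P_{k'}', t_{k'}')$. By comparability, $k = k'$ and $t_i = t_i'$ for all $i \in [\Pi_1.len]$, so the membership conditions defining $\sembrack{\Pi_1}_A$ and $\sembrack{\Pi_2}_A$ involve the same time intervals $[0, t_0]$ and $[t_{i-1}, t_i]$. Second, by the hypothesis $\Pi_1 \subseteq \Pi_2$, we have $P_i \subseteq P_i'$ for each $i$.

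Next I would verify the two membership conditions in turn. For the initial segment: since $\xi \in \sembrack{\Pi_1}_A$, for every $t \in [0, t_0]$ we have $\xi(t) \in P_0$, and since $P_0 \subseteq P_0'$, it follows that $\xi(t) \in P_0'$. For each subsequent index $1 \leq i \leq k$: for every $t \in [t_{i-1}, t_i]$ we have $\xi(t) \in P_i \subseteq P_i'$. Since $\xi$ already lies in $\exec{A}$, these pointwise containments are exactly the conditions defining $\sembrack{\Pi_2}_A$, so $\xi \in \sembrack{\Pi_2}_A$.

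The argument is essentially bookkeeping, and the only place care is needed is in invoking comparability to ensure the time intervals line up before applying the segment-wise set inclusion; there is no genuine difficulty.
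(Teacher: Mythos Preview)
Your proof is correct and is exactly the direct definition-chase the paper has in mind; the paper itself states this proposition without proof, calling it ``straightforward,'' so your unfolding of comparability plus segment-wise containment is the intended argument.
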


\begin{proposition}
\label{prop:tube:distance}
Consider two comparable pipes $\Pi_1, \Pi_2$. 
If $d_H(\Pi_1,\Pi_2) \geq c$ then for any two automata $A$ and $B$,
$d_H(\sembrack{Pi_1}_A, \sembrack{Pi_2}_B) \geq c$.
\end{proposition}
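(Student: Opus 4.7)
The plan is to chase the definitions and then exhibit a witness execution. From $d_H(\Pi_1,\Pi_2) = \max_i d_H(P_i, P'_i) \geq c$, I would first pick an index $i^\ast$ at which the maximum is attained, so that $d_H(P_{i^\ast}, P'_{i^\ast}) \geq c$. Unpacking the one-sided Hausdorff distance on subsets of $\reals^n$, this produces a witness point $x^\ast \in P_{i^\ast}$ with $\inf_{y \in P'_{i^\ast}} |x^\ast - y| \geq c$; this is the ``far'' point on the $i^\ast$-th segment that drives the rest of the argument.

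Next I would record a pointwise-in-time bound coming from the pipe semantics. For any $\xi_1 \in \sembrack{\Pi_1}_A$, any $\xi_2 \in \sembrack{\Pi_2}_B$, and any $t \in [t_{i^\ast-1}, t_{i^\ast}]$, the definition of $\sembrack{\cdot}$ forces $\xi_1(t) \in P_{i^\ast}$ and $\xi_2(t) \in P'_{i^\ast}$, so
\[
|\xi_1(t) - \xi_2(t)| \;\geq\; \inf_{y \in P'_{i^\ast}} |\xi_1(t) - y|,
\]
and the trace metric $d(\xi_1, \xi_2) = \sup_s |\xi_1(s) - \xi_2(s)|$ inherits this lower bound at $s = t$. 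I would then pick an execution $\xi_1^\ast \in \sembrack{\Pi_1}_A$ that passes through $x^\ast$ at some $t^\ast \in [t_{i^\ast-1}, t_{i^\ast}]$; for such a $\xi_1^\ast$ the pointwise bound at $t^\ast$ reads $d(\xi_1^\ast, \xi_2) \geq |x^\ast - \xi_2(t^\ast)| \geq c$ for every $\xi_2 \in \sembrack{\Pi_2}_B$. Taking infimum over $\xi_2$ and supremum over $\xi_1 \in \sembrack{\Pi_1}_A$ then yields $d_H(\sembrack{\Pi_1}_A, \sembrack{\Pi_2}_B) \geq c$, as desired.

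I expect producing the witness execution $\xi_1^\ast$ to be the main obstacle. Membership $x^\ast \in P_{i^\ast}$ does not by itself guarantee the existence of a trajectory of $A$ contained in $\Pi_1$ that actually visits $x^\ast$, since a segment $P_{i^\ast}$ could in principle contain points no trajectory in $\sembrack{\Pi_1}_A$ ever reaches. In the intended setting of Algorithm~1 the pipes are built from simulation traces, so each segment $P_i$ is tight around the reach set of $A$ on $[t_{i-1}, t_i]$; I would therefore either invoke this tightness as a standing side condition on the pipes that the algorithm manipulates, or fold it into the definition of $\sembrack{\Pi}_A$ by replacing each $P_i$ with the reach set of executions in $\sembrack{\Pi}_A$ restricted to $[t_{i-1}, t_i]$. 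Either way the witness execution is furnished automatically and the direct argument above goes through.
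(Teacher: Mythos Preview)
The paper gives no proof of this proposition; it is listed among the ``straightforward propositions'' following the pipe definitions. So there is nothing to compare against, and the assessment is of your argument on its own merits.

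Your overall plan is the natural one, and you have correctly located the real difficulty: the step producing a witness execution $\xi_1^\ast \in \sembrack{\Pi_1}_A$ that passes through the Hausdorff witness point $x^\ast$ is not justified by the hypotheses. This is not a minor technicality; the proposition as literally stated is \emph{false}. Take $n=1$, single-segment pipes $\Pi_1 = ([0,10],\,1)$ and $\Pi_2 = ([0,1],\,1)$, and let $A = B$ be the system with $\Theta = \{0\}$ and $f \equiv 0$. Then $d_H(\Pi_1,\Pi_2) = 9$, but the only execution of either system is the constant $0$, which lies in both pipes, so $\sembrack{\Pi_1}_A = \sembrack{\Pi_2}_B$ and $d_H(\sembrack{\Pi_1}_A,\sembrack{\Pi_2}_B) = 0$.

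Your two proposed repairs (tightness of the pipe, or redefining $\sembrack{\cdot}$ via reach sets) are sensible, but they change the statement rather than prove it. A cleaner repair, and the one implicitly doing the work where the paper actually invokes this proposition in the soundness theorem, uses the diameter bound that the algorithm checks alongside the distance bound. If in addition $dia(\Pi_1) \leq d$, then at the maximizing index $i^\ast$ every point of $P_{i^\ast}$ lies within $d$ of $x^\ast$ and hence at distance at least $c - d$ from $P'_{i^\ast}$; since every $\xi_1 \in \sembrack{\Pi_1}_A$ must pass through $P_{i^\ast}$ on $[t_{i^\ast-1},t_{i^\ast}]$, one obtains $d(\xi_1,\xi_2) \geq c - d$ for \emph{all} pairs $\xi_1 \in \sembrack{\Pi_1}_A$, $\xi_2 \in \sembrack{\Pi_2}_B$, with no witness execution needed.
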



\subsection{Discrepancy Functions}
\label{sec:discrepancy}

Our decision procedure for $c$-abstractions will use numerical simulations (defined in Section~\ref{sec:algo}) and model annotations called {\em discrepancy functions\/}. Here we recall the definition of 
discrepancy functions which were introduced in~\cite{DMV:EMSOFT2013}.  
In that earlier paper we showed  that with discrepancy functions and numerical simulators 
we can obtain sound and relatively complete decision procedures for safety verification of nonlinear and switched dynamical system. 
Moreover, the software implementation of this approach proved to be scalable~\cite{C2E22013} .

Informally, a discrepancy function gives an upper bound on the distance between two trajectories as a function of the distance between their initial states and the time elapsed. 
\begin{definition}
\label{def:trajMetric}
A smooth function $V : \reals^{2n} \rightarrow \nnreals$ is called a {\em discrepancy function\/} for an $(n,m)$-dimensional dynamical system 
if and only if there are functions $\alpha_1,\alpha_2 \in \K_\infty$ and a uniformly continuous function $\beta: \reals^{2n} \times \reals \rightarrow \nnreals$ with $\beta(\vx_1,\vx_2,t) \rightarrow 0$ as
$|\vx_1-\vx_2| \rightarrow 0$ such that for any pair of states $\vx_1, \vx_2 \in \reals^n$:
\begin{align}
&\vx_1 \neq \vx_2 \iff V(\vx_1,\vx_2) > 0, \label{eq:nonzero} \\ 
&\alpha_1(|\vx_1 - \vx_2|) \leq V(\vx_1,\vx_2) \leq \alpha_2(|\vx_1 - \vx_2|) \mbox{and} \label{eq:alpha} \\
& \forall \ t > 0,\ V(\xi_{\vx_1}(t),\xi_{\vx_2}(t)) \leq \beta(\vx_1,\vx_2,t), \label{eq:beta}
\end{align}
\end{definition}
A tuple $(\alpha_1, \alpha_2, \beta)$ satisfying the above conditions is called a {\em witness \/} to the discrepancy function $V$. 
By discrepancy function of a dynamical system we will refer to $V$ as well as its witness interchangeably. 
Note that the output dimension $m$ has no bearing on the discrepancy function of the system.
The first condition requires that the function $V(\vx_1, \vx_2)$ vanishes to zero if and only if the first two arguments are identical. 
The second condition states that the value of $V(\vx_1, \vx_2)$ can be upper and lower-bounded by functions of the $\ell^2$ distance between $\vx_1$ and $\vx_2$. 
The final, and the more interesting, condition requires that the function $V$ applied to trajectories of $A$ at a time $t$ from a pair of initial states is  upper bounded and converges to $0$ as  $\vx_1$ converges to $\vx_2$.

For linear dynamical systems, discrepancy functions can be computed automatically by solving Lyapunov like equations, 
and in~\cite{DMV:EMSOFT2013} several strategies for proposed for nonlinear systems. 
Existing notions such as and Lipschitz constants, contraction metrics~\cite{ContractionNonlinear}, and incremental Lyapunov functions~\cite{lyapincremental,iiiss,GirardPT08} all yield discrepancy functions of varying quality. 

\section{A Semi-decision Procedure for Abstraction}
\label{sec:algo}

Our semi-decision procedure for $c$-abstractions will use numerical simulations of  the dynamical systems.
Given a closed dynamical system $A$ and a particular initial state $\vx \in \Theta$,  
for a step size $\tau>0$, validated ODE solvers (such as~\cite{capd,nedialkov1999,bouissou2006grklib}) compute a 
sequence of boxes (more generally polyhedra) $R_0,R_2,\dots,R_k \subseteq \reals^n$, 
such that for each $j\in[k]$, $t\in [(k-1)\tau,k\tau]$, $\xi_\vx(t)\in R_k$.
For a desired error bound $\epsilon>0$, by reducing the step size $\tau$, the diameter of $R_k$ can be made smaller than $\epsilon$. 
We define such simulations in terms of pipes below. 
\begin{definition}
\label{def:trace}
Given a dynamical system $A$, an initial state $\vx$, a time bound $T$, an error bound $\epsilon>0$, and time step $\tau>0$, 
a {\em $(\vx,T,\epsilon,\tau)$-simulation pipe} is a 
finite sequence $\phi = (R_0,t_0),(R_1,t_1),\dots,(R_k,t_k)$ where
\begin{enumerate}[(i)]
\item  $t_0=0$, $t_k=T$, and $\forall \ j\in [k]$, $t_{j}-t_{j-1} \leq  \tau$,
\item  $\forall j\in [k]$ and $\forall t\in [t_{j-1},t_j]$, $\xi_\vx(t)\in R_j$, and
\item $\forall j\in [k]$, $dia(R_j)\leq \epsilon$.
\end{enumerate}
\end{definition}
Our algorithm makes subroutine calls to a $\mathit{Simulate}$ function with these parameters 
which then returns a pipe with the above properties. 

The simulation pipe is then bloated using the discrepancy function of the dynamical system as follows.
\begin{definition}
\label{def:bloat}
Let $\mathit{sim} = (R_0,t_0),(R_1,t_1),\dots,(R_k,t_k)$ be a $(\vx,T,\epsilon,\tau)$-simulation pipe for a dynamical system $A$.
Suppose $V$ be a discrepancy function of $A$ with witness $(\alpha_1, \alpha_2, \beta)$. Then, for $\delta >0$, 
$\mathit{Bloat}(\mathit{sim}, \delta, V)$ is defined as the pipe 
$(P_0,t_0),\dots,(P_k,t_k)$ such that for each $j \in [k]$, 
\[
P_j = \{ \vx_1 \ | \exists \ \vx_2 \in R_j \ \wedge V(\vx_1, \vx_2) \leq e_j\},
\]
where 
\[
e_j = \sup_{t \in [t_{j-1}, t_j], \vx' \in B_\delta(\vx)} \beta(\vx, \vx', t).
\] 
\end{definition}
In other words, $e_j$ is an upper-bound on the value of $V$ for two executions $\xi_\vx$ and $\xi_{\vx'}$ 
starting from within $B_\delta(\vx)$ over the time interval $[t_{j-1}, t_j]$. 
And $P_j$ bloats $R_j$ to include all states $\vx_1$ for which there exists a state $\vx_2$ in $R_j$ 
with the discrepancy function bounded by $e_j$. 
Our algorithm makes subroutine calls to a $\mathit{Bloat}$ function which takes a simulation pipe, the function $\beta$ and the constant $\delta$ and returns the pipe $(P_0,t_0),\dots,(P_k,t_k)$  defined above. 

\begin{algorithm}[h!]
\caption{Deciding $c$-abstractions.}
\label{alg:main}
\SetKwInOut{Input}{input}
\SetKw{Goto}{go to}
\Input{$\A_1, V_1, \A_2, V_2, T, c$}
$\mathit{Init} \gets \Theta_1$\;
$\delta \gets \delta_0; \tau \gets \tau_0; \epsilon \gets \epsilon_0$\;
\While{$\mathit{Init} \neq \emptyset$}
{
	$X_1 \gets \mathit{Partition}(\mathit{Init}, \delta)$\;
	$X_2 \gets \mathit{Partition}(\Theta_2, \delta)$\;
	\For{$\vx_{10} \in X_1, \vx_{20} \in X_2$}
	{
		$\mathit{sim}[\vx_{10}] \gets \mathit{Simulate}(A_1,\vx_{10},\epsilon, T, \tau)$\;
		$\mathit{pipe}[\vx_{10}] \gets \mathit{Bloat}(\mathit{sim}[\vx_{10}],\delta,V_1)$\;				
		$\mathit{sim}[\vx_{20}] \gets \mathit{Simulate}(A_2,\vx_{20},\epsilon, T, \tau)$\;
		$\mathit{pipe}[\vx_{20}] \gets \mathit{Bloat}(\mathit{sim}[\vx_{20}],\delta,V_2)$\;
	}	
				\ForEach{$\vx_{10} \in X_1$}
				{
					\uIf{$\exists \ \vx_{20} \in X_2, d_H(\mathit{pipe}[\vx_{10}], \mathit{pipe}[\vx_{20}]) \leq \frac{c}{L_g}$ 
					$\wedge \mathit{dia}(\mathit{pipe}[\vx_{10}]) \leq \frac{c}{2L_g}$
					$\wedge \mathit{dia}(\mathit{pipe}[\vx_{20}]) \leq \frac{c}{2L_g}$ \label{ln:remove}}
					{
						$\mathit{Init} \gets \mathit{Init} \setminus B_\delta(\vx_{01})$\;
					}
					\uElseIf{$\forall \ \vx_{20} \in X_2, d_H(\mathit{pipe}[\vx_{10}], \mathit{pipe}[\vx_{20}]) \geq \frac{c}{S_g}$ 
					$\wedge \mathit{dia}(\mathit{pipe}[\vx_{10}]) \leq \frac{c}{2S_g}$ \label{ln:ce}}
					{
						\Return{\mbox{COUNTEREX} $\vx_{10}, \delta$} ;
					}
					\Else {$\delta \gets \frac{\delta}{2}; \tau \gets \frac{\tau}{2}; \epsilon \gets \frac{\epsilon}{2}$\;  \label{ln:refine}}
				}			
	}

\Return{$c$-\mbox{ABSTRACTION}} 
\end{algorithm}

\subsection{Description of the Algorithm}
Inside the while loop, first, two $\delta$-covers are computed for 
$\mathit{Init}$---a subset of the initial states $\Theta_1$ of $A_1$,  
and the set of initial states $\Theta_2$ of $A_2$
Next, in the first for loop, 
for each of the states $\vx_{10} \in X_1$ and $\vx_{20} \in X_2$  
in the respective covers, a $(\vx_{i0}, T, \epsilon, \tau)$-simulation pipe $\mathit{sim}[\vx_{i0}]$ 
is computed. Then this pipe is bloated with the parameter $\delta$ and the corresponding 
discrepancy function $V_i$.
The following proposition summarizes the main property of the bloated pipes.

\begin{proposition}
\label{prop:pipecontainsexecs}
For the dynamical system $\A_i, i \in \{1,2\}$ and constants $\delta, \epsilon, \tau$ and $T$,
$\exec{A_i}(B_\delta(\vx_{i0}),T) \subseteq \sembrack{\mathit{pipe}[\vx_{i0}]}$.
\end{proposition}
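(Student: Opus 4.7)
The plan is to unfold both sides of the containment using the definitions of simulation pipe, bloat, and discrepancy function, and then show that any execution from a nearby initial state stays inside the bloated boxes by applying the key inequality~\eqref{eq:beta} pointwise in time.

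First, I would fix an arbitrary execution $\xi_{\vx'}\in \exec{A_i}(B_\delta(\vx_{i0}),T)$ with $\vx'\in B_\delta(\vx_{i0})$, and an arbitrary segment index $j\in[k]$ and time $t\in[t_{j-1},t_j]$. The goal, by the definition of $\sembrack{\cdot}_{A_i}$, is to show $\xi_{\vx'}(t)\in P_j$. Since $\mathit{sim}[\vx_{i0}]$ is a $(\vx_{i0},T,\epsilon,\tau)$-simulation pipe, Definition~\ref{def:trace}(ii) gives $\xi_{\vx_{i0}}(t)\in R_j$. This is the natural candidate for the witness $\vx_2$ in the set defining $P_j$ (see Definition~\ref{def:bloat}), with $\vx_1 := \xi_{\vx'}(t)$.

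Next, I would verify the discrepancy bound $V_i(\xi_{\vx'}(t),\xi_{\vx_{i0}}(t))\leq e_j$. By property~\eqref{eq:beta} of the witness $(\alpha_1,\alpha_2,\beta)$ for $V_i$, we have
\[
V_i(\xi_{\vx_{i0}}(t),\xi_{\vx'}(t))\;\leq\;\beta(\vx_{i0},\vx',t).
\]
Since $\vx'\in B_\delta(\vx_{i0})$ and $t\in[t_{j-1},t_j]$, the right-hand side is bounded above by
\[
\sup_{s\in[t_{j-1},t_j],\,\vx''\in B_\delta(\vx_{i0})}\beta(\vx_{i0},\vx'',s)\;=\;e_j,
\]
exactly as defined in the bloat construction. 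Therefore $\xi_{\vx'}(t)\in P_j$, and since $j$ and $t$ were arbitrary, $\xi_{\vx'}\in\sembrack{\mathit{pipe}[\vx_{i0}]}$.

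The main obstacle is essentially bookkeeping rather than conceptual: one must be careful that the supremum defining $e_j$ is attained (or at least finite) so that $P_j$ is non-trivial, and that the $\delta$-ball used in the bloat matches the neighborhood from which executions are drawn. Finiteness of $e_j$ follows from uniform continuity of $\beta$ and compactness of $B_\delta(\vx_{i0})\times[t_{j-1},t_j]$ guaranteed by Definition~\ref{def:trajMetric}. Once these technicalities are pinned down, the proposition follows directly from the two-step chain: simulation places the nominal trajectory inside $R_j$, and the discrepancy function transports this containment to every nearby trajectory inside the bloated set $P_j$.
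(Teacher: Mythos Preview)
Your proposal is correct and follows essentially the same approach as the paper: fix an initial state $\vx'\in B_\delta(\vx_{i0})$ and a time $t\in[t_{j-1},t_j]$, use the simulation-pipe property to place $\xi_{\vx_{i0}}(t)\in R_j$, then invoke the discrepancy bound~\eqref{eq:beta} to conclude $\xi_{\vx'}(t)\in P_j$. Your write-up is in fact more careful than the paper's, since you explicitly match the bound $\beta(\vx_{i0},\vx',t)$ against the supremum defining $e_j$ and note the compactness/continuity argument guaranteeing that $e_j$ is finite.
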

\begin{proof}
Let $\mathit{sim}[\vx_{i0}] = (R_0, t_0), \ldots, (R_k, t_k)$ and
 $\mathit{pipe}[\vx_{i0}] = (P_0, t_0), \ldots, (P_k, t_k)$.
We fix an initial state $\vx' \in B_\delta(\vx_{i0})$, and show that 
for any $t \leq t_k$, the state $\xi_\vx'(t)$ is contained in the set $P_j$,
where $t_{j-1} \leq t \leq t_{j}$. Let us fix $t$, which also fixes $t_{j-1}$ and $t_j$. 
From the definition of the {\em Simulation\/} (Definition~\ref{def:trace}) function we know that 
$\xi_{\vx_0}(t) \subseteq R_j$. 
And from Definition~\ref{def:bloat}, we know that since $\vx' \in B_\delta(\vx_{i0})$, 
the $V(\xi_{\vx'}(t), \xi_{\vx_{i0}}(t)) \leq \beta(\vx',\vx_{i0},t)$ and therefore  
$\xi_{\vx'}(t) \in P_j$. 
\end{proof}

\begin{corollary}
\label{cor:post}
For the dynamical system $\A_i, i \in \{1,2\}$ and constants $\delta, \epsilon, \tau$ and $T$,
\[
\reach{A_i}(\Theta_i,T) \subseteq \bigcup_{\vx_{i0} \in X_i} \bigcup_{j \in [T/\tau]} \mathit{pipe}[\vx_{i0}].P_j,
\]
here we use $\mathit{pipe}[\vx_{i0}].P_j$ to denote the subset of $\reals^n$ in the $j^{th}$ segment of the pipe 
$\mathit{pipe}[\vx_{i0}]$.
\end{corollary}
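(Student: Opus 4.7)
The plan is to reduce the corollary directly to Proposition~\ref{prop:pipecontainsexecs} by exploiting the fact that $X_i$ is constructed as a $\delta$-cover of $\Theta_i$ in the algorithm. At a high level, every reachable state comes from some trajectory starting at some $\vx_0 \in \Theta_i$, which must lie inside $B_\delta(\vx_{i0})$ for at least one $\vx_{i0} \in X_i$, and then Proposition~\ref{prop:pipecontainsexecs} places that trajectory (and hence the reachable state) inside the bloated pipe around $\vx_{i0}$.

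First I would fix an arbitrary state $\vx \in \reach{A_i}(\Theta_i, T)$. By the definition of the reachable set, there exists $\vx_0 \in \Theta_i$ and $t \in [0,T]$ such that $\xi_{\vx_0}(t) = \vx$. Since $\mathit{Partition}(\Theta_i,\delta)$ returns $X_i$ as a $\delta$-cover of $\Theta_i$, I can pick $\vx_{i0} \in X_i$ with $\vx_0 \in B_\delta(\vx_{i0})$. Proposition~\ref{prop:pipecontainsexecs} then yields $\xi_{\vx_0} \in \sembrack{\mathit{pipe}[\vx_{i0}]}$, which by the definition of a pipe's semantics means that for every $s \in [0,T]$, if $s \in [t_{j-1}, t_j]$ then $\xi_{\vx_0}(s) \in \mathit{pipe}[\vx_{i0}].P_j$.

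Next I would pick the index $j \in [T/\tau]$ with $t \in [t_{j-1}, t_j]$ (the pipe segments tile the interval $[0,T]$ by construction in Definition~\ref{def:trace}, so at least one such $j$ exists). Applying the above with $s = t$ gives $\vx = \xi_{\vx_0}(t) \in \mathit{pipe}[\vx_{i0}].P_j$, which is a term in the double union on the right-hand side. Since $\vx$ was arbitrary, this proves the desired inclusion.

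There is no real obstacle here; the argument is essentially a pointwise application of Proposition~\ref{prop:pipecontainsexecs} combined with the defining property of the $\delta$-cover $X_i$. The only small subtlety worth making explicit is that the pipe segments obtained from $\mathit{Simulate}$ satisfy $t_0 = 0$, $t_k = T$, and $t_j - t_{j-1} \leq \tau$, so the union over $j \in [T/\tau]$ truly covers every time instant in $[0,T]$ and the chosen index $j$ always exists.
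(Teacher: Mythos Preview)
Your proposal is correct and is exactly the argument the paper intends: the corollary is stated immediately after Proposition~\ref{prop:pipecontainsexecs} with no separate proof, so it is meant to follow by precisely the pointwise application of that proposition together with the $\delta$-cover property of $X_i$ that you spell out.
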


Every time a new set of bloated simulation pipes are computed, 
$\mathit{pipe}[\vx_{10}]$ for each $\vx_{10} \in \mathit{Init}$ 
and 
$\mathit{pipe}[\vx_{20}]$ for each $\vx_{10} \in X_2$,
the algorithm performs the following checks.
If there exists a $\mathit{pipe}[\vx_{10}]$ and a $\mathit{pipe}[\vx_{20}]$,
both  less than $c/2L_g$ in diameter and within $c/L_g$ distance then 
$B_{\delta}(\vx_{10})$ is eliminated from $\mathit{init}$.
Here $L_g$ is the Lipschitz constant and $S_g$ is the sensitivity constant of 
the common output function $g$.
If there exists a $\mathit{pipe}[\vx_{10}]$ such that for all the $\mathit{pipe}[\vx_{20}]$'s, $\vx_{20} \in X_2$, 
the diameter of the first is less than $c/2S_g$ and they are at least  $c/S_g$ distance 
away from each other, then $\vx_{10}$ (and $\delta$) is produced as a counter-example
to the $c$-abstraction. 
The while loop ends when $\mathit{Init}$ becomes empty.

\subsection{Soundness and Termination of Algorithm}
\label{subsec:analysis}
In this section, we prove the correctness of the algorithm.
We assume that the output mappings (the function $g$) is the same for the 
two models. 
\begin{theorem}
\label{thm:main1}
For automata with identical observation mappings, the algorithm is sound. \\
That is, if the output is $c$-$\mathit{ABSTRACTION}$, then, $A_2$ is a $c$-abstraction of $A_1$ upto time $T$,
and if the output is  $(\mathit{COUNTEREX}, \vx_{10}, \delta)$ then $A_2$ is not a $c$-abstraction of $A_1$.
In the latter case, all the traces of $A_1$ corresponding to executions starting from $B_\delta(\vx_0)$
are at least $c$ distance away from any trace of $A_2$.
\end{theorem}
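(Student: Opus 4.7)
The plan is to handle the two cases of the algorithm's output separately. In both cases, the engine is to lift bounds that hold in state-space (on the pipes) into bounds in the output space, using the Lipschitz constant $L_g$ for the positive case and the sensitivity constant $S_g$ for the negative case. Proposition~\ref{prop:pipecontainsexecs} is the bridge: it lets us locate any concrete trajectory inside a computed pipe.

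For the $c$-$\mathrm{ABSTRACTION}$ case, first I would use the fact that at termination $\mathit{Init}=\emptyset$, so $\Theta_1$ is covered by balls $B_\delta(\vx_{10})$, each eliminated via line~\ref{ln:remove} for some partner $\vx_{20}\in X_2$. Given an arbitrary trajectory $\xi_\vx$ of $A_1$ with initial state $\vx\in B_\delta(\vx_{10})$, Proposition~\ref{prop:pipecontainsexecs} gives $\xi_\vx\in\sembrack{\mathit{pipe}[\vx_{10}]}_{A_1}$. I would then take the witness to be the nominal simulation trajectory $\xi_{\vx_{20}}$ of $A_2$, which is contained in $\mathit{pipe}[\vx_{20}]$ by construction. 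Combining the Hausdorff bound $d_H(\mathit{pipe}[\vx_{10}],\mathit{pipe}[\vx_{20}])\le c/L_g$ with the diameter bounds $c/(2L_g)$ on both pipes, and applying the triangle inequality segment by segment, I obtain a state-space bound on $|\xi_\vx(t)-\xi_{\vx_{20}}(t)|$ for every $t\in[0,T]$; Lipschitz continuity of $g$ then yields $\sup_t|g(\xi_\vx(t))-g(\xi_{\vx_{20}}(t))|\le c$, which is exactly the definition of $A_1\preceq_{c,T}A_2$.

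For the $\mathrm{COUNTEREX}$ case with output $(\vx_{10},\delta)$, I need to show that every trajectory $\xi_\vx$ of $A_1$ with $\vx\in B_\delta(\vx_{10})$ has trace at distance at least $c$ from every trace of $A_2$. The cover of $\Theta_2$ by the $B_\delta(\vx_{20})$'s together with Proposition~\ref{prop:pipecontainsexecs} places any trajectory of $A_2$ inside some $\mathit{pipe}[\vx_{20}]$, and $\xi_\vx$ lies inside $\mathit{pipe}[\vx_{10}]$ for the same reason. The condition on line~\ref{ln:ce} then provides, for each such $\vx_{20}$, a segment index $j^*$ at which the one-sided Hausdorff distance between the segments is at least $c/S_g$. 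I would combine this with the diameter bound $c/(2S_g)$ on $\mathit{pipe}[\vx_{10}]$ to argue via triangle inequality that $\xi_\vx$'s own location $\xi_\vx(t^*)$ in that segment is still separated from every point of the $\vx_{20}$-pipe's segment by at least a positive constant, and hence from $\xi'(t^*)$ for any trajectory $\xi'$ of $A_2$ sitting in that pipe; the sensitivity constant $S_g$ then pushes this state-space gap into an output-space gap of at least $c$.

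The step I expect to be the main obstacle is exactly this last transfer in the counterexample case: a one-sided Hausdorff lower bound only asserts the existence of \emph{some} far point in $\mathit{pipe}[\vx_{10}]$, whereas we need the specific point $\xi_\vx(t^*)$ on our trajectory to be far. The diameter bound on $\mathit{pipe}[\vx_{10}]$ is what rescues the argument, since it forces all points of a segment to be close to one another, so the distance witnessed by the ``bad'' point transfers (up to the diameter) to $\xi_\vx(t^*)$; the constants $c/S_g$ and $c/(2S_g)$ in the guard are tuned precisely so that this transfer still leaves a gap of $c/S_g\cdot \tfrac{1}{2}$ or similar, which after multiplication by $S_g$ in the output dimension delivers the required separation. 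Once this accounting is carried out, both soundness claims follow directly.
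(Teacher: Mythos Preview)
Your overall structure matches the paper's: split into the two output cases, push the pipe-level bound down to state trajectories, then convert to the output space via $L_g$ (positive case) or $S_g$ (negative case). The difference is in how that middle step is carried out, and this is where your accounting slips.

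In the paper's proof of the $c$-$\mathit{ABSTRACTION}$ case, the diameter conditions on the pipes are \emph{not} invoked at all. The paper simply reads off, from Proposition~\ref{prop:pipecontainsexecs} together with the Hausdorff condition on line~\ref{ln:remove}, that for any $\vx'\in B_\delta(\vx_{20})$ one has $|\xi_\vx(t)-\xi'_{\vx'}(t)|\le c/L_g$ for every $t$, and then multiplies by $L_g$. Your route through the triangle inequality is more explicit, but it costs you: from $d_H(P_j,P_j')\le c/L_g$ you get a point $p\in P_j'$ with $|\xi_\vx(t)-p|\le c/L_g$, and from $\mathit{dia}(P_j')\le c/(2L_g)$ you then get $|\xi_\vx(t)-\xi_{\vx_{20}}(t)|\le c/L_g + c/(2L_g)=3c/(2L_g)$. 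After applying $L_g$ this only gives $3c/2$, not $c$, so your claimed final bound does not close.

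The same issue recurs in the $\mathit{COUNTEREX}$ case. You correctly isolate the obstacle (one-sided Hausdorff only furnishes \emph{some} far point, not necessarily $\xi_\vx(t^*)$), and your fix via the diameter bound is sound in shape --- but the arithmetic lands at $c/S_g - c/(2S_g)=c/(2S_g)$ in state space, hence only $c/2$ in output space after multiplying by $S_g$. Your own parenthetical (``leaves a gap of $c/S_g\cdot\tfrac12$ or similar'') already shows the factor of $2$ that you then lose. The paper sidesteps this entirely by invoking Proposition~\ref{prop:tube:distance} to pass directly from the pipe-level lower bound to a lower bound on the distance between the contained executions, again without using the diameter guard in the proof.

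So: the strategy is right, the use of Proposition~\ref{prop:pipecontainsexecs} is right, and the Lipschitz/sensitivity conversions are right; but the triangle-inequality detours you insert in both directions degrade the constants, and the guards in the algorithm are calibrated to the paper's direct reading rather than to your corrected one. If you want to keep your more explicit argument, you would need tighter guard constants in the algorithm; to match the theorem as stated, follow the paper and lift the trajectory bounds straight from the pipe bounds via Propositions~\ref{prop:pipecontainsexecs} and~\ref{prop:tube:distance}.
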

\begin{proof}
For the first part, assume that the algorithm returns $c$-$\mathit{ABSTRACTION}$ and we 
will show that for any initial state $\vx \in \Theta_1$, there exists an initial state $\vx' \in \Theta_2$
such that $d(\nu_\vx,\nu'_{\vx'}) \leq c$. Here $\nu_\vx$ is the output trace of $A_1$ from $\vx$ and 
$\nu'$ is the output trace of $A_2$ from $\vx'$.

The algorithm returns $c$-$\mathit{ABSTRACTION}$ only when $\mathit{Init}$ becomes empty. 
This occurs when each initial state $\vx \in \Theta_1$ of $A_1$ is in the 
$\delta$-ball of some state $\vx_{10} \in \Theta_1$
such that $\vx_{10}$ is in a cover $X_1$ and satisfies the condition in Line~\ref{ln:remove}. 
It suffices to show that this condition $d_H(\mathit{pipe}[\vx_{10}],\mathit{pipe}[\vx_{20}]) \leq c/L_g$ 
implies that there exists $\vx' \in \Theta_2$, $d(\nu_\vx,\nu'_{\vx'}) \leq c$.
%

From Proposition~\ref{prop:pipecontainsexecs} it follows that 
for any $\vx' \in B_\delta(\vx_{20})$, and for any $t \in [0,T]$, 
$|\xi_\vx(t) - \xi'_{\vx'}(t)| \leq c/L_g$.
Let us fix a $\vx' \in B_\delta(\vx_{20})$. 
Then, the distance between the corresponding traces is:
\begin{eqnarray*}
d(\nu_\vx, \nu'_{\vx'}) &=& \sup_{t \in [0,T]} |\nu_\vx(t) - \nu'_{\vx'}(t)| \\
&=& \sup_{t \in [0,T]} |g(\xi_\vx(t)) - g(\xi'_{\vx'}(t))| \\
&\leq&  L_g \sup_{t \in [0,T]} |\xi_\vx(t) - \xi'_{\vx'}(t)| \\
&\leq&  L_g \frac{c}{L_g} = c.
\end{eqnarray*}
Here we have used the assumption that the two systems have the same output mapping $g$ 
and recall that $L_g$ is the Lipschitz constant of this mapping.

For the second part, we assume that the algorithm returns $\mathit{COUNTEREX}$
and show that there exists a trace $\nu_{\vx_{10}}$ of $A_1$ which is at least 
$c$ distance away from all traces of $A_2$.
Examining the algorithm, output of  $\mathit{COUNTEREX}$  occurs if there exists a 
constant $\delta > 0$ and $\vx_{10} \in \Theta_1$, 
such that for any initial state $\vx_{20}$ of $A_2$, 
\[
d_H(\mathit{pipe}[\vx_{10}], \mathit{pipe}[\vx_{20}]) \geq \frac{c}{S_g},
\]
where $S_g$ is the sensitivity of the output mapping.
From Proposition~\ref{prop:tube:distance} it follows that 
for any $\vx' \in \Theta_2$, for all $t \in [0, T]$, 
$\xi_{\vx_{10}}(t) - \xi'_{\vx'}(t) \geq \frac{c}{S_g}$.
Now, consider the distance between the corresponding traces:
\begin{eqnarray*}
d(\nu_{\vx_{10}}, \nu'_{\vx'}) &=& \sup_{t \in [0,T]} |g(\xi_{\vx_{10}}(t)) - g(\xi'_{\vx'}(t))| \\
&\geq& \sup_{t \in [0,T]}  S_g |\xi_{\vx_{0}}(t) - \xi'_{\vx'}(t)| \\
&\geq&  S_g \frac{c}{S_g} = c.
\end{eqnarray*}
Thus, the traces are at least $c$ apart, and $A_2$ is not a $c$-abstraction of $A_1$.
Here again we have used the assumption that the two systems have the same output mapping $g$ 
 $S_g$ is the Lipschitz constant of this mapping.

\end{proof}

Next, we prove that the algorithm terminates provided 
either (a) $A_1 \preceq_{\frac{c}{2},T} A_2$ 
or (b) $d_H(\trace{A_1},\trace{A_2}) > 2c$.

\begin{theorem}
\label{thm:main2}
The $c$-abstraction algorithm terminates  
either if $A_2$ is a $c_1$-abstraction of $A_1$ for any $c_1 < \frac{c}{2}$
or if there exists a trace of $A_1$ which is $c_2$ distance away from all traces of $A_2$, for some $c_2 > 2c$.
\end{theorem}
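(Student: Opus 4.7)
The proof will hinge on a pipe-convergence lemma together with compactness. First I would establish: as $\delta, \tau, \epsilon \to 0$, the diameter of every bloated pipe $\mathit{pipe}[\vx_{i0}]$ tends to $0$. This follows from the simulation guarantee $\mathit{dia}(R_j)\leq \epsilon$ in Definition~\ref{def:trace}, combined with Definition~\ref{def:bloat}: the bloat radius is governed by $e_j = \sup_{t,\vx' \in B_\delta(\vx_{i0})} \beta(\vx_{i0},\vx',t)$, which drops to $0$ as $\delta \to 0$ because $\beta(\vx,\vx',t)\to 0$ whenever $|\vx-\vx'|\to 0$, uniformly over the compact interval $[0,T]$. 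Condition~\eqref{eq:alpha} then turns this vanishing bound on $V$ into a vanishing bound on the state-space radius. A companion observation is that $d_H(\mathit{pipe}[\vx_{10}], \mathit{pipe}[\vx_{20}])$ converges from above to the true trajectory distance $\sup_t |\xi_{\vx_{10}}(t) - \xi_{\vx_{20}}(t)|$.

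For case (a), suppose $A_1 \preceq_{c_1,T} A_2$ with $c_1 < c/2$. Fix any $\vx \in \Theta_1$. By the abstraction hypothesis there is a matching $\vx^* \in \Theta_2$ whose output trace is within $c_1$ of $\nu_\vx$, which via the sensitivity/Lipschitz relationship between state and output distances (captured by $L_g, S_g$) gives a corresponding state-trajectory distance strictly below $c/(2L_g)$. Once refinement has made $\delta$ small enough that (i) $\vx$ falls inside a $\delta$-ball around some $\vx_{10} \in X_1$, (ii) $\vx^*$ lies inside a $\delta$-ball around some $\vx_{20} \in X_2$, and (iii) both bloated pipes have diameter below $c/(2L_g)$, the convergence lemma yields $d_H(\mathit{pipe}[\vx_{10}], \mathit{pipe}[\vx_{20}]) \leq c/L_g$, so the removal test at Line~\ref{ln:remove} fires and $B_\delta(\vx_{10})$ is eliminated from $\mathit{Init}$. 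Compactness of $\Theta_1$, together with uniform continuity of $\beta$, ensures that one fixed (small enough) value of $\delta$ suffices simultaneously for every point of $\Theta_1$, so only finitely many halvings of $\delta, \tau, \epsilon$ are needed before $\mathit{Init}$ drains to empty.

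For case (b), pick a witness $\vx^{\dagger} \in \Theta_1$ whose output trace lies at distance strictly greater than $2c$ from every trace of $A_2$; via $S_g$ the corresponding state trajectory stays at distance strictly more than $2c/S_g$ from every state trajectory of $A_2$. By Corollary~\ref{cor:post} every state trajectory of $A_2$ is enclosed in some $\mathit{pipe}[\vx_{20}]$ with $\vx_{20} \in X_2$. Once refinement has driven all pipe diameters below $c/(2S_g)$, the pipe around any cover point $\vx_{10}^{\dagger} \in X_1$ close to $\vx^{\dagger}$ stays at distance at least $c/S_g$ from every $\mathit{pipe}[\vx_{20}]$, since the pipe-to-pipe distance converges down to the state-trajectory distance from above by the convergence lemma. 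The check on Line~\ref{ln:ce} then fires and the algorithm returns a counterexample.

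The main technical obstacle is the quantitative bookkeeping around the strict margins. In case (a), $c_1 < c/2$ must absorb both the pipe diameter and the cover granularity simultaneously, in a single uniform choice of $\delta$ across the compact $\Theta_1$; in case (b), the slack $c_2 - 2c$ plays the analogous role and must be shown to survive the translation from output-space (Definition~\ref{def:abs}) to state-space distance via $L_g, S_g$, together with the finite over-approximation error introduced by the bloated pipes. Ensuring that the refinement loop halves the parameters only finitely often before these margins are realized uniformly on $\Theta_1$ and $\Theta_2$ is the heart of the termination argument.
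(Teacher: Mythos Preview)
Your overall strategy matches the paper's: shrink the pipe diameters via refinement until the appropriate branch (removal in case (a), counterexample in case (b)) must fire. But the output-to-state translations are inverted. In case (a), an output-trace distance $<c/2$ gives a state-trajectory distance $<c/(2S_g)$ via the sensitivity inequality $|g(a)-g(b)|\geq S_g|a-b|$, not $c/(2L_g)$ as you write. In case (b), an output distance $>2c$ gives a state distance $>2c/L_g$ via the Lipschitz inequality, not $2c/S_g$. With the correct constants the arithmetic no longer closes automatically: reaching the removal threshold $c/L_g$ in case (a) requires $c/(2S_g)<c/L_g$, and reaching the counterexample threshold $c/S_g$ in case (b) requires $2c/L_g>c/S_g$; both reduce to the standing hypothesis $L_g<2S_g$ that the paper invokes explicitly and that your argument omits. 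Relatedly, your ``converges from above'' claim is not correct as stated: one only has $D-d_2\leq d_H(\mathit{pipe}_1,\mathit{pipe}_2)\leq D+d_1$ for trajectory distance $D$ and pipe diameters $d_1,d_2$, so in case (b) you must use the lower bound $D-d_2$ together with the margin $2c/L_g$, not a one-sided convergence.

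There is also a structural gap in case (b). You assume refinement proceeds far enough for Line~\ref{ln:ce} to fire at the witness, but you have not excluded the possibility that the witness's cover point is eliminated by Line~\ref{ln:remove} at some earlier, coarser level---after which the witness is gone from $\mathit{Init}$ and your argument stalls. The paper closes this by contradiction: the removal clause already forces both pipe diameters $\leq c/(2L_g)$, and then the state separation $>2c/L_g$ forces $d_H(\mathit{pipe}[\vx_{10}],\mathit{pipe}[\vx_{20}])>c/L_g$, violating the first clause of Line~\ref{ln:remove}. You need this step (or an equivalent) before you can assert that refinement continues long enough for the counterexample test to trigger.
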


\begin{proof}
Assume without loss of generality that $L_g <  2S_g$.
For the first part, assume that the $A_2$ is a $\frac{c}{2}$-abstraction of $A_1$.
First, we will show that Line~\ref{ln:ce} returning a counter-example is never executed.
For the sake of contradiction, let us assume that this line is executed. 
Then $\mathit{dia}(\mathit{pipe}[\vx]) \leq \frac{c}{2S_g}.$ 
Also, for any execution $\xi_\vx$ of $A_1$ there exists an execution $\xi'_{\vx'}$ of $A_2$ 
such that for any $t \in [0,T]$, $|\xi_{\vx}(t) - \xi'_{\vx'}(t)| < \frac{c}{2S_g}$
(this is because $A_1 \preceq_{\frac{c}{2},T} A_2$). 
Thus, from any pipe containing $\xi_\vx$, the distance to any pipe containing any $\xi'_{\vx'}$ 
is less than $\frac{c}{S_g}$. This violates the precondition for returning a counter-example.

It suffices to show that every initial state $\vx \in \Theta_1$ is eventually removed from $\mathit{Init}$
in Line~\ref{ln:remove}. Let us fix an execution $\xi'_{\vx'}$ of $A_2$ 
such that for any $t \in [0,T]$, $|\xi_{\vx}(t) - \xi'_{\vx'}(t)| < \frac{c}{2S_g}$. 
Under the above conditions, in each iteration of the while loop, 
$\delta, \tau$, and $\epsilon$ are halved in Line~\ref{ln:refine}.
From the Definition~\ref{def:trace} the diameter $\mathit{dia}(\mathit{sim}[\vx_{10}]) \leq \epsilon$
for $\vx_{10}$ with $|\vx_{10} - \vx| \leq \delta$.
Similarly,  the diameter $\mathit{dia}(\mathit{sim}[\vx_{20}]) \leq \epsilon$
for $\vx_{20}$ with $|\vx_{20} - \vx'| \leq \delta$.
Notice that as these parameters decrease, from the definition of the discrepancy function, 
$\beta_i(\vx,\vx',.) \rightarrow 0$. Thus, the distance between the bloated tubes containing  $\xi_{\vx}$ and 
$\xi'_{\vx'}$ also converge to $\frac{c}{2S_g}$.
With the assumption that $L_g < 2S_g$, it follows that eventually two points $\vx_{10} \in X_1$
and $\vx_{20} \in X_2$ will satisfy the condition ($d_H(\mathit{pipe}[\vx_{10}], \mathit{pipe}[\vx_{20}]) \leq \frac{c}{L_g}$) in Line~\ref{ln:remove} with $\vx' \in B_\delta(\vx_{20})$, and more importantly, $\vx \in B_{\delta}(\vx_{10})$.

For the second part, suppose there exists a trace $\nu_\vx$ of $A_1$ such that 
for any trace $\nu'_{\vx'}$ of $A_2$, $d(\nu_\vx, \nu_\vx) > 2c$. 
Then we know that for each $t \in [0,T]$, $|\xi_\vx(t) - \xi'_{\vx'}(t)| > 2c/L_g$.
For the sake of contradiction, let us assume that $\vx$ is eliminated from $\mathit{Init}$ 
in Line~\ref{ln:remove}. Then there must exist $\mathit{pipe}[\vx_{10}]$ containing $\xi_\vx$ 
  $\mathit{pipe}[\vx_{20}]$ containing $\xi'_{\vx'}$ with $\mathit{dia}(\mathit{pipe}[\vx_{10}]) \leq c/2L_g$
	and $\mathit{dia}(\mathit{pipe}[\vx_{20}]) \leq c/2L_g$.
	Then, $d_H(\mathit{pipe}[\vx_{10}],\mathit{pipe}[\vx_{10}]) >  2c/L_g - c/2L_g -  c/2L_g = c/L_g$
	which contradicts the first condition in Line~\ref{ln:remove}.
	
So, $\vx$ is never eliminated from $\mathit{Init}$. 
Analogous to the argument presented for the first part, the pipe computed containing $\xi_\vx$ 
become smaller and smaller in diameter as $\tau, \epsilon$ and $\delta$ are reduced
and the  pipes computed containing all the executions of $A_2$ starting from $\Theta_2$, including $\xi'_{\vx'}$ also 
become smaller. Eventually,  $d_H(\mathit{pipe}[\vx_{10}], \mathit{pipe}[\vx_{20}]) > 2c/L_g$
as for each $d(\xi_\vx, \xi'_{\vx'}| > 2c/L_g$. 
At this point the condition in Line~\ref{ln:ce} becomes true and $\vx$ is produced with the 
$\mathit{COUNTEEX}$ output.
\end{proof}
Thus, there is a range of values of the distance between the sets of traces 
$d_H(\trace{A_1}, \trace{A_2})$ in $[\frac{c}{2}, 2C]$ where the algorithm is not guaranteed to terminate. 
This range can be made arbitrarily small by choosing small values of $c$.

\section{Unbounded time Extension}
\label{sec:unbounded}

In the previous section, we presented a semi-decision procedure for reasoning about 
bounded-time abstraction relations between models of cyberphysical systems.
Since cyberphysical systems typically run for long time horizons, ideally 
we would like to perform unbounded time analysis. The following procedure
uses bound-time simulations and attempts to compute an over-approximation of the 
unbounded-time reach set of a dynamical system. 

The algorithm adapts a standard fixpoint procedure to now work with 
our simulation-based technique for computing reach set approximations.
The set $\mathit{newreach}$ stores the newly discovered reachable states
and $\mathit{reach}$ accumulates all the reachable states.
Both are initialized to $\Theta$.
The while loop iterates until no newly reachable states are discovered;
at that point $\mathit{reach}$ is produced as the output. 
Inside the while loop, $\mathit{newreach}$ is $\delta$-partitioned.
Then, as in  Algorithm~1, an array of $(\vx_1, k\tau, \epsilon, \tau)$-simulations 
$\mathit{sim}[\vx_1]$ are computed for each $\vx_1 \in X_1$
and they are bloated to compute the array of pipes $\mathit{pipe}[\vx_1]$.
The union of the segments in all these pipes give the set $\mathit{post}$
and the $\mathit{newreach}$ set is obtained by subtracting $\mathit{reach}$ from $\mathit{post}$.

\begin{algorithm}[h!]
\caption{Unbounded time reachability.}
\label{alg:main}
\SetKwInOut{Input}{input}
\SetKw{Goto}{go to}
\Input{$\A, V, k, \tau, \delta, \epsilon$}
$\mathit{newreach} \gets \Theta$\;
$\mathit{reach} \gets \Theta$\;
\While{$\mathit{newreach} \neq \emptyset$}
{
$X_1 \gets \mathit{Partition}(\mathit{newreach}, \delta)$\;
\For{$\vx_{1} \in X_1$}
	{
		$\mathit{sim}[\vx_{1}] \gets \mathit{Simulate}(A,\vx_{1},\epsilon, k\tau, \tau)$\;
		$\mathit{pipe}[\vx_{1}] \gets \mathit{Bloat}(\mathit{sim}[\vx_{1}],\delta,V)$\;				
	}	
	$\mathit{post} \gets  \cup_{i \in[k]}\cup_{\vx_1 \in X_1} \mathit{pipe}[\vx_{1}].P_i$ \;  \label{ln:reach1}
	$\mathit{newreach} \gets \mathit{post} \setminus \mathit{reach}$\;
	$\mathit{reach} \gets \mathit{reach} \cup \mathit{newreach}$\;
}
\Return{$\mathit{reach}$}
\end{algorithm}

The following theorem states that if the above algorithm returns a set of states $R$
then this set is an over-approximation of the {\em unbounded\/} reach set of the dynamical system $A$.
\begin{theorem}
\label{thm:unbounded} 
If Algorithm~2 returns a set of states $R$ then $\reach{A}(\Theta) \subseteq R$.
\end{theorem}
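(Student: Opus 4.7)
The plan is to show that the returned set $R$ is flow-invariant over time intervals of length $k\tau$, and then promote this to full unbounded-time invariance by induction on the number of $k\tau$-steps. Since $\Theta \subseteq R$ (because $R$ is initialized to $\Theta$ and only grows), this invariance will give us $\reach{A}(\Theta) \subseteq R$.

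First I would set up notation for the iterates. Let $R_0 = N_0 = \Theta$, and let $R_j$ and $N_j$ denote the values of $\mathit{reach}$ and $\mathit{newreach}$ after the $j$th pass through the while loop. Termination occurs at some index $m$ with $N_m = \emptyset$ and $R = R_m$. Writing $P_j$ for the set computed on \lnref{reach1} during the $j$th iteration, the loop body gives $R_j = R_{j-1} \cup P_j$, so in particular $P_j \subseteq R_m$ for every $j \leq m$.

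The key lemma is the following $k\tau$-closure property: for every $\vx \in R_m$ and every $t \in [0, k\tau]$, we have $\xi_\vx(t) \in R_m$. To prove it, note that any $\vx \in R_m$ either lies in $\Theta = N_0$ or was added as part of some $N_j$ with $j \geq 1$; in either case there is an index $j$ with $\vx \in N_j$ and $j+1 \leq m$ (this is where we use that $N_m = \emptyset$ forced the loop to exit after processing $N_{m-1}$, so every non-empty $N_j$ gets processed). During iteration $j+1$ the set $X_1$ is a $\delta$-cover of $N_j$, so there exists $\vx_1 \in X_1$ with $\vx \in B_\delta(\vx_1)$. Proposition~\ref{prop:pipecontainsexecs} then gives $\xi_\vx(t) \in \sembrack{\mathit{pipe}[\vx_1]}$ for all $t \in [0, k\tau]$, and by construction of $P_{j+1}$ on \lnref{reach1} this means $\xi_\vx(t) \in P_{j+1} \subseteq R_m$.

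Finally I would lift this to arbitrary times. For any $\vx \in \Theta$ and any $t \geq 0$, write $t = Nk\tau + s$ with $N \in \mathbb{N}$ and $s \in [0,k\tau)$. A straightforward induction on $N$ using the $k\tau$-closure lemma shows $\xi_\vx(Nk\tau) \in R_m$, and applying the lemma once more at the state $\xi_\vx(Nk\tau)$ with elapsed time $s$ yields $\xi_\vx(t) \in R_m$, using the semigroup property $\xi_\vx(Nk\tau + s) = \xi_{\xi_\vx(Nk\tau)}(s)$ of autonomous ODE solutions. This gives $\reach{A}(\Theta) \subseteq R_m = R$, as required.

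The only subtle point — and what I expect to be the main obstacle to write cleanly — is ensuring that every state we need actually lies in some processed $N_j$, i.e.\ that the invariant ``$R_{j-1}$ already contains every state reachable up to time $(j-1)k\tau$ from $\Theta$'' is maintained, so that no reachable state is silently skipped between iterations. This is why the termination condition $N_m = \emptyset$ is essential: it certifies that the one-step post of everything previously added has been absorbed into $R_m$, which is precisely what closes the induction.
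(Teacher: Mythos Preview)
Your proposal is correct and follows essentially the same approach as the paper's proof sketch: both hinge on the fact (via \propref{pipecontainsexecs}/\corref{post}) that the $k\tau$-flow of each processed $N_j$ lands inside $P_{j+1}\subseteq R_m$, and then extend to unbounded time by stepping in increments of $k\tau$. The paper phrases this as an induction on iterations together with a fixpoint observation, whereas you package it as an explicit $k\tau$-flow-invariance lemma for $R_m$ plus the semigroup property---a slightly cleaner but equivalent organization.
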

\paragraph*{Proof sketch}
From Corollary~\ref{cor:post} it follows that in each iteration of the while loop 
$\reach{A}(\mathit{newreach}, k\tau) \subseteq \mathit{post}$, that is the set computed 
using simulations and bloating in Line~\ref{ln:reach1}.
The set $\mathit{newreach}$ is updated to be an over-approximation of the states that are reached 
for the first time in the current iteration.
A simple induction on the number of iterations show that at the $i^{th}$ iteration,
 $\mathit{reach}$ contains all states that are reachable from $\Theta$ in $i(k\tau)$ time.
The computation halts in an iteration when no new reachable states are discovered
and the corresponding output $\mathit{reach}$ is the least fixpoint of the algorithm 
containing $\Theta$ and therefore it  over-approximates the unbounded-time reach set from $\Theta$.

\section{Discussions}
\label{sec:dis}

The simulation-based reachability algorithms~\cite{DMV:EMSOFT2013,donze2010breach,staliro-tool-paper,zhthesis,HuangM:hscc2012} 
have provided a general and scalable building-block for analysis of nonlinear, switched, and hybrid models. Since simulation-based analysis can be made embarrassingly parallel, these approaches can scale to real-world models with dozens and possibly hundreds of continuous dimensions. 
This paper takes this static-dynamic analysis approach to checking abstraction relations. 
As we discussed in the introduction, computing reach set over-approximations are not sufficient for 
reasoning about abstraction relations. 
Our  procedure takes into account the precision of the 
over-approximations  in deciding that each behavior of $A$ is indeed close to some behavior 
of $B$ or that there exits a behavior of $A$ that is far from all behaviors of $B$. 
For the sake of simplifying presentations, in this paper we presented the results for models of nonlinear dynamical systems, but these results can be extended to switched systems~\cite{DL2003} in a more or less straightforward fashion (see~\cite{DMV:EMSOFT2013}
for analogous extensions for reachability algorithms).
This work suggests several directions for future research in developing 
new notions of abstraction, corresponding decision procedures, and in extending them to be applicable to broader classes of 
models that arise in analysis of cyberphysical systems under attacks.

\subsection {Future Research Directions}
\label{sec:future}

\paragraph*{Switched system models and models with inputs}
The switched system~\cite{DL2003} formalism is useful where it suffices to view the software or the adversary as something that only changes the continuous dynamics. They are useful for modeling  time-triggered control systems and timing-based attacks.
A switched system is described by a collection of dynamical systems (Definition~\ref{def:sys}) and a piece-wise constant switching signal that determines which particular ODE from the collection that governs the evolution of the system at a given time. 
A timing attack can be modeled as altered switching signal (as well as the changed dynamics).
One nice property of switched system models is that the executions $\xi$ are continuous functions of time just like ODEs.
If all the ODEs are equipped with discrepancy functions, then we show in~\cite{DMV:EMSOFT2013} that it is possible to compute reach set over-approximations for a set of switching signals by partitioning both the initial set and the set of switching signals. 
This technique essentially works also for analyzing abstraction relation between switched models. 

Switched and ODE models with inputs will enable us to model open cyberphysical systems and adversaries that feed bad inputs to such systems. The main challenge here is reasoning about the distance between trajectories that start from different initial states, as well as, 
experience different input signals. In our recent paper~\cite{HuangM:HSCC2014} we have used an input-to-state discrepancy function to reason about reachability of such models and a similar approach can work for abstractions. 

\paragraph*{Nondeterministic models}
All of the above models are deterministic once the initial states, the switching signal, and the input signal are specified. 
In order to apply out analytical framework to a broader class of system models and attacks, we have to develop decision procedures for hybrid model with nondeterministic transitions as well as nondeterministic dynamics. For the latter case, the results for reach set over-approximation presented in~\cite{zhthesis} could provide a starting point.


\appendix

\section{Basic Definitions and Notations}
\label{app:basics}

For a natural number $n\in \naturals$, $[n]$ is the set $\{1,2,\dots, n\}$.
For a sequence $A$ of objects of any type with $n$ elements, we refer to the $i^{th}$ element, $i \leq n$ by $A_i$.
For a real-valued vector $\vx$, $|\vx|$ denotes the  $\ell^2$-norm unless otherwise specified.
The diameter of a  compact set $R \subseteq \reals^n$, $dia(R)$ is defined as the maximum distance between any two 
points in it: $dia(R) = sup_{\vx, \vx' \in R} |\vx - \vx'|$.

\paragraph*{Variable valuations} 
Let $V$ be a finite set of real-valued variables. Variables are names for state and input components. 
A {\em valuation $\vv$ for $V$} is a function mapping each variable name to its value in $\reals$. The set of valuations for $V$ is denoted by $\val{V}$.
Valuations can be viewed as vectors in $\reals^{|V|}$ dimensional space with by fixing some arbitrary ordering on variables.
$B_\delta(\vv) \subseteq \val{V}$ is the closed ball of valuations with radius $\delta$ centered at $\vv$.
%
The notions of continuity, differentiability, and integration are lifted to functions defined over sets of valuations in the usual way. 

For any function $f:A\mapsto B$ and a set $S \subseteq A$, $f \restr S$ is the restriction of $f$ to $S$.
That is, $(f \restr S)(s = f(s)$ for each $s \in S$.
So, for a variable $v \in V$ and a valuation $\vv \in \val{V}$, 
$\vv \restr v$ is the function mapping $\{v\}$ to the value $\vv(v)$. 
A function $f:A \mapsto \reals$ is {\em Lipschitz} if there exists a constant $L\geq 0$---called the {\em Lipschitz constant}---such that for all $a_1, a_2\in A$  $|f(a_1)-f(a_2)| \leq L|a_1-a_2|$. 
We define a function $f$ to have {\em sensitivity\/} of $S_f$ if 
for all $a_1, a_2\in A$  $|f(a_1)-f(a_2)| \geq S_f|a_1-a_2|$. 
A continuous function $\alpha: \nnreals \mapsto \nnreals$ is in the {\em class of $\K$ functions} if $\alpha(0) = 0$  and it is strictly increasing. 
 Class $\K$ functions are closed under composition and inversion. 
A class $\K$ function $\alpha$  is a {\em class $\K_\infty$} function if $\alpha(x) \rightarrow \infty$ as $x\rightarrow \infty$.
%
A continuous function $\beta: \nnreals\times \nnreals \mapsto \nnreals$ is called a {\em class $\K\L$ function} if for any $t$, 
$\beta(x,t)$ is a class $\K$ function in $x$ and for any $x$, $\beta(x,t)\rightarrow 0$ as $t\rightarrow \infty$.

\paragraph*{Trajectories} 
Trajectories model the continuous evolution of variable valuations over time. 
A {\em trajectory\/} for $V$ is a differentiable function $\tau: \nnreals \mapsto \val{V}$. 
The set of all possible trajectories for $V$ is denoted by $\traj{V}$.
For any function $f:C \mapsto [A\mapsto B]$ and a set $S \subseteq A$, $f \restrrange S$ is the restriction of $f(c)$ to $S$.
That is, $(f \restrrange S)(c) = f(c) \restr S$   for each $c \in C$.
In particular, for a variable $v \in V$ and a trajectory $\tau \in \traj{V}$, 
$\tau \restrrange v$ is the trajectory of $v$ defined by $\tau$.

\paragraph*{Dynamical systems}
The set of all trajectories of $A$ with respect to a set of initial states $\Theta' \subseteq \val{X}$ and a set of  is denoted by $\traj{A, \Theta'}$. 
The components of dynamical system $A$ and $A_i$ are denoted by $X_A, \Theta_A, f_\A$ and 
$X_i, \Theta_i, f_i$, respectively. We will drop the subscripts when they are clear from context.
The set of all possible state trajectories and output trajectories of $A$ (from different initial states in $\Theta$) 
are denoted by  $\exec{A}$ and $\trace{A}$, respectively.
The set of executions (and traces) from the set of initial states $\Theta$ and upto time bound $T$ 
is denoted by  $\exec{A}(\Theta, T)$ (and  $\trace{A}(\Theta, T)$, repectively).
A state $\vx \in \reals^n$ is {\em reachable\/} if there exists and execution $\xi$ and a time $t$
such that $\xi(t) = \vx$. 
The set of reachable states from initial set $\Theta$ within time $T$ is denoted by $\reach{A}(\Theta, T)$.

\bibliographystyle{abbrv}
\bibliography{iss,sayan1}

\begin{thebibliography}{10}

\bibitem{alur95algorithmic}
R.~Alur, C.~Courcoubetis, N.~Halbwachs, T.~A. Henzinger, P.-H. Ho, X.~Nicollin,
  A.~Olivero, J.~Sifakis, and S.~Yovine.
\newblock The algorithmic analysis of hybrid systems.
\newblock {\em Theoretical Computer Science}, 138(1):3--34, 1995.

\bibitem{alur:tcs}
R.~Alur and D.~L. Dill.
\newblock A theory of timed automata.
\newblock {\em Theoretical Computer Science}, 126:183--235, 1994.

\bibitem{lyapincremental}
D.~Angeli.
\newblock A lyapunov approach to incremental stability properties.
\newblock {\em Automatic Control, IEEE Transactions on}, 47(3):410--421, 2002.

\bibitem{iiiss}
D.~Angeli.
\newblock Further results on incremental input-to-state stability.
\newblock {\em Automatic Control, IEEE Transactions on}, 54(6):1386--1391,
  2009.

\bibitem{staliro-tool-paper}
Y.~Annapureddy, C.~Liu, G.~Fainekos, and S.~Sankaranarayanan.
\newblock S-taliro: A tool for temporal logic falsification for hybrid systems.
\newblock In {\em TACAS}, 2011.

\bibitem{annpureddy2011s}
Y.~Annpureddy, C.~Liu, G.~Fainekos, and S.~Sankaranarayanan.
\newblock {\em S-taliro: A tool for temporal logic falsification for hybrid
  systems}.
\newblock Springer, 2011.

\bibitem{bouissou2006grklib}
O.~Bouissou and M.~Martel.
\newblock Grklib: a guaranteed runge kutta library.
\newblock In {\em Scientific Computing, Computer Arithmetic and Validated
  Numerics, 2006. SCAN 2006. 12th GAMM-IMACS International Symposium on}, pages
  8--8. IEEE, 2006.

\bibitem{capd}
CAPD.
\newblock Computer assisted proofs in dynamics, 2002.

\bibitem{clarke1994model}
E.~M. Clarke, O.~Grumberg, and D.~E. Long.
\newblock Model checking and abstraction.
\newblock {\em ACM Transactions on Programming Languages and Systems (TOPLAS)},
  16(5):1512--1542, 1994.

\bibitem{donze2010breach}
A.~Donz{\'e}.
\newblock Breach, a toolbox for verification and parameter synthesis of hybrid
  systems.
\newblock In {\em Computer Aided Verification}, pages 167--170. Springer, 2010.

\bibitem{DuggiralaJZM12}
P.~S. Duggirala, T.~T. Johnson, A.~Zimmerman, and S.~Mitra.
\newblock Static and dynamic analysis of timed distributed traces.
\newblock In {\em RTSS}, pages 173--182, 2012.

\bibitem{C2E22013}
P.~S. Duggirala, S.~Mitra, and M.~Viswanathan.
\newblock The compute execute check engine (c2e2), 2013.

\bibitem{DMV:EMSOFT2013}
P.~S. Duggirala, S.~Mitra, and M.~Viswanathan.
\newblock Verificationcation of annotated models from executions.
\newblock In {\em Proceedings of International Conference on Embedded Software
  (EMSOFT 2013)}, pages 1--10, Montreal, QC, Canada, September 2013. ACM
  SIGBED, IEEE.

\bibitem{GJP:IFAC2006}
A.~Girard, A.~A. Julius, and G.~J. Pappas.
\newblock Approximate simulation relations for hybrid systems.
\newblock In {\em IFAC Analysis and Design of Hybrid Systems}, Alghero, Italy,
  June 2006.

\bibitem{GirardPT08}
A.~Girard, G.~Pola, and P.~Tabuada.
\newblock Approximately bisimilar symbolic models for incrementally stable
  switched systems.
\newblock In M.~Egerstedt and B.~Mishra, editors, {\em HSCC}, volume 4981 of
  {\em Lecture Notes in Computer Science}, pages 201--214. Springer, 2008.

\bibitem{griffiths1968unsolvability}
T.~V. Griffiths.
\newblock The unsolvability of the equivalence problem for $\lambda$-free
  nondeterministic generalized machines.
\newblock {\em Journal of the ACM (JACM)}, 15(3):409--413, 1968.

\bibitem{zhthesis}
Z.~Huang.
\newblock On simulation based verification of nonlinear nondeterministic hybrid
  systems.
\newblock 2013.

\bibitem{HuangM:HSCC2014}
Z.~Huang and S.~Mitra.
\newblock Proofs from simulations and modular annotations.
\newblock In {\em In 17th International Conference on Hybrid Systems:
  Computation and Control}, Berlin, Germany. ACM press.

\bibitem{HuangM:hscc2012}
Z.~Huang and S.~Mitra.
\newblock Computing bounded reach sets from sampled simulation traces.
\newblock In {\em In The 15th International Conference on Hybrid Systems:
  Computation and Control (HSCC 2012), Beijing, China.}, 2012.

\bibitem{TIOAmon}
D.~K. Kaynar, N.~Lynch, R.~Segala, and F.~Vaandrager.
\newblock {\em The Theory of Timed {I/O} Automata}.
\newblock Synthesis Lectures on Computer Science. Morgan Claypool, November
  2005.
\newblock Also available as Technical Report MIT-LCS-TR-917.

\bibitem{khalil-book-3ed}
H.~K. Khalil.
\newblock {\em Nonlinear Systems}.
\newblock Prentice Hall, New Jersey, 3rd edition, 2002.

\bibitem{DL2003}
D.~Liberzon.
\newblock {\em Switching in Systems and Control}.
\newblock Systems and Control: Foundations and Applications. Birkhauser,
  Boston, June 2003.

\bibitem{ContractionNonlinear}
W.~Lohmiller and J.~J.~E. Slotine.
\newblock On contraction analysis for non-linear systems.
\newblock {\em Automatica}, 1998.

\bibitem{luenberger}
D.~G. Luenberger.
\newblock {\em Introduction to Dynamic Systems: Theory, Models, and
  Applications}.
\newblock John Wiley and Sons, Inc., New York, 1979.

\bibitem{nedialkov1999}
N.~S. Nedialkov, K.~R. Jackson, and G.~F. Corliss.
\newblock Validated solutions of initial value problems for ordinary
  differential equations.
\newblock {\em Applied Mathematics and Computation}, 105(1):21--68, 1999.

\bibitem{approxbisim}
G.~Pola, A.~Girard, and P.~Tabuada.
\newblock Approximately bisimilar symbolic models for nonlinear control
  systems.
\newblock {\em Automatica}, 44(10):2508--2516, 2008.

\bibitem{DBLP:conf/vmcai/PrabhakarDMV13}
P.~Prabhakar, P.~S. Duggirala, S.~Mitra, and M.~Viswanathan.
\newblock Hybrid automata-based cegar for rectangular hybrid systems.
\newblock In {\em VMCAI}, pages 48--67, 2013.

\bibitem{senizergues1997equivalence}
G.~S{\'e}nizergues.
\newblock The equivalence problem for deterministic pushdown automata is
  decidable.
\newblock In {\em Automata, languages and programming}, pages 671--681.
  Springer, 1997.

\bibitem{henzinger95algorithmic}
{T. A. Henzinger} and {P. -H. Ho}.
\newblock Algorithmic analysis of nonlinear hybrid systems.
\newblock In {P. Wolper}, editor, {\em Proceedings of the 7th International
  Conference On Computer Aided Verification}, volume 939, pages 225--238,
  Liege, Belgium, 1995. Springer Verlag.

\end{thebibliography}

\end{document}